\documentclass[
a4paper, 
10pt,    
twoside, 
]{LTJournalArticle}

\usepackage{amsmath}
\usepackage{amsfonts}
\usepackage{amssymb}
\usepackage{amstext}
\usepackage{amsthm}
\newtheorem{theorem}{Theorem}

\newtheorem{remark}{Remark}
\newcounter{RomanNumber}

\usepackage{algorithm}
\usepackage{algorithmic}
\usepackage{color}
\usepackage{appendix}
\usepackage{makecell}
\usepackage{multicol}

\title{Deterministic Maximum Likelihood Direction Finding in the Mixture Noise of Gaussian and Spherically Invariant Components}

\author{Mingyan Gong 
}
\date{}

\renewcommand{\maketitlehookd}{
	\begin{abstract}
		\noindent Spherically invariant (SI) random processes can model impulsive noise and unreliable measurements. Recently, the mixture noise of Gaussian and SI components has been used in deterministic maximum likelihood direction finding. In this context, the Expectation-Conditional Maximization (ECM) algorithm, an extension of the expectation-maximization algorithm, has been applied and designed. However, simulation results show that the ECM algorithm always improperly converges. In this article, the ECM Either (ECME) algorithm, an extension of the ECM algorithm, is applied and designed, which additionally utilizes the actual log-likelihood function to first update partial parameter estimates at every iteration and does not need to initialize all parameter estimates. Moreover, the deterministic Cramer–Rao low bounds (CRLBs) of DOA estimators are derived and compared. Simulation results indicate that the ECME algorithm exhibits proper convergence and its root mean square errors of DOA estimates asymptotically approach the CRLBs as the signal powers increase, i.e., the derived CRLBs are correct.

        \par\textbf{Keywords: } Array signal processing, DOA estimation, EM algorithm, impulsive noise, radar signal processing, spherically invariant random process.
	\end{abstract}
}

\begin{document}

\maketitle 

\section{Introduction}

With the development of array signal processing, accurate and robust direction of arrival (DOA) estimation has become more and more crucial for source localization in various applications, e.g., radar, sonar, and wireless communications \cite{Godara}. Obviously, the Gaussian noise assumption is frequently employed in DOA estimation \cite{Krim}, but the Gaussian process may not adequately model some practical noises, e.g., impulsive noise. When conventional methods based on Gaussian noise are applied to DOA estimation in non-Gaussian noise, such mismatch suffers from a dramatic degradation in performance. Therefore, several non-Gaussian distributions have been considered for better modeling impulsive noise, including the symmetric $\alpha$-stable distribution \cite{Shao,Nikias}, Cauchy distribution \cite{Tsakalides2}, generalized Gaussian distribution \cite{Novey}, and complex elliptically symmetric distribution \cite{Ollila,Mecklenbr}. On this foundation, many accurate and robust DOA estimation methods have been proposed, such as FLOM based methods \cite{Tsakalides,THLiu,Swami}, methods based on spatial sign and rank \cite{Visuri}, $\ell_p$-MUSIC \cite{Zeng}, and some recent methods \cite{Cai,Raj}.

Recently, the mixture noise also attracts attention. As the name suggests, the mixture noise is usually the summation of Gaussian noise and a high amplitude component \cite{Middleton}, and here the high amplitude component is used to model impulsive noise or generate outliers. In \cite{DDLee,Lim}, this high amplitude component is also the Gaussian process, thus causing Gaussian mixture noise. In particular, for deterministic maximum likelihood direction finding (DMLDF) in Gaussian mixture noise, the Space-Alternating Generalized Expectation-maximization (SAGE) algorithm, an expectation-maximization (EM) type algorithm \cite{Fessler}, has been applied in \cite{Kozick}. However, the SAGE algorithm may not converge properly when signal powers are unequal, which inspires the alternating expectation/conditional maximization algorithm in \cite{Gong}. In addition to Gaussian mixture noise, the high amplitude component can also be a spherically invariant random process (SIRP) because SIRPs have been successfully developed to model heavy-tailed clutters in adaptive radar detection \cite{KYao,EConte,Barnard,EConte2,Gini,Pascal}. Similarly, for DMLDF in the mixture noise of Gaussian and spherically invariant (SI) components, the Expectation-Conditional Maximization (ECM) algorithm, also an EM-type algorithm \cite{Meng}, has been applied \cite{Yassine}. However, simulation results show that the ECM algorithm always improperly converges.

In this article, the ECM Either (ECME) algorithm \cite{Liu}, an extension of the ECM algorithm, is applied and designed, which additionally utilizes the actual log-likelihood function (LLF) to first update partial parameter estimates at every iteration and does not need to initialize all parameter estimates. Moreover, the deterministic Cramer–Rao low bounds (CRLBs) of DOA estimators are derived and compared. Simulation results indicate that the ECME algorithm exhibits proper convergence and its root mean square errors (RMSEs) of DOA estimates asymptotically approach the CRLBs as the signal powers increase, i.e., the derived CRLBs are correct.

\emph{Notations}: $(\cdot)^T$ and $(\cdot)^H$ mean the transpose and conjugate transpose operators, respectively. $\Vert\textbf{a}\Vert$ stands for the Euclidean norm of vector $\mathbf{a}$. $\mathbf{B}\succ\mathbf{0}_N$ means that the conjugate symmetric matrix $\mathbf{B}$ of order $N$ is positive definite. $\mathrm{Tr}\{\mathbf{B}\}$,  $\mathrm{Det}(\mathbf{B})$, $\mathbf{B}^{1/2}$, and $\mathbf{B}^{-1}$ denote the trace, determinant, square root, and inverse of $\mathbf{B}$, respectively. $\mathbf{B}_{i,j}$ is the element in the $i$th row and $j$th column
of $\mathbf{B}$. $\mathcal{E}\{\cdot\}$ and $\mathcal{D}\{\cdot\}$ denote the expectation and covariance operators, respectively. $\Re\{\mathbf{B}\}$ and $\Im\{\mathbf{B}\}$ mean the real and imaginary parts of $\mathbf{B}$, respectively. $\hat{\beta}$ denotes the estimate of unknown parameter $\beta$, $\bar{\beta}$ is usually $\hat{\beta}$ updated at the previous iteration, and $\breve{\beta}$ is $\hat{\beta}$ updated at the current iteration. Finally, $\mathbf{0}$ is an all-zero vector or matrix, $\jmath^2=-1$, and $\mathbf{I}_N$ is the identity matrix of order $N$.

\section{DMLDF Problem}

We consider a uniform linear array (ULA) consisting of $L$ omnidirectional antenna elements, which receives $G~(G<L)$ narrow-band source signals impinging from far-field at distinct directions $\beta_g$'s, where $\beta_g$ stands for the DOA of the $g$th source signal. Then, we let the spacing between adjacent antennas be the half-wavelength of these source signals so that the signal received at this ULA can be modeled as
\begin{eqnarray}  \label{1}  
\mathbf{r}(t)=\sum_{g=1}^G\mathbf{d}(\beta_g)u_g(t)+\mathbf{n}(t)=\mathbf{D}(\boldsymbol{\beta})\mathbf{u}(t)+\mathbf{n}(t),
\end{eqnarray}
where $\mathbf{d}(\beta_g)=[1~e^{-\jmath\pi\cos(\beta_g)}~\cdots~e^{-\jmath(L-1)\pi\cos(\beta_g)}]^T$ stands for the direction vector of the $g$th source signal, and $\mathbf{D}(\boldsymbol{\beta})=[\mathbf{d}(\beta_1)~\cdots~\mathbf{d}(\beta_G)]$ with $\boldsymbol{\beta}=[\beta_1~\cdots~\beta_G]^T$. Additionally, $\mathbf{u}(t)=[u_1(t)~\cdots~u_G(t)]^T$ represents the source signal vector.

\subsection{Mixture Noise of Gaussian and SI Components}

In \eqref{1}, $\mathbf{n}(t)$ denotes the mixture noise vector of Gaussian and SI components and is constructed as
\begin{eqnarray}  \label{2}  
&\mathbf{n}(t)=\mathbf{w}(t)+\mathbf{v}(t)=\mathbf{w}(t)+\sqrt{\tau(t)}\mathbf{o}(t),
\end{eqnarray}
where $\mathbf{w}(t)\sim\mathcal{CN}(\mathbf{0},\sigma^2\mathbf{I}_L)$ denotes the Gaussian component while $\mathbf{v}(t)$ means the SI component. Here, $\mathbf{o}(t)\sim\mathcal{CN}(\mathbf{0},\boldsymbol{\Sigma})$ and the texture process $\tau(t)>0$ is assumed to be deterministic and unknown for simplicity \cite{EConte2,Besson}, which leads to $\mathbf{v}(t)\sim\mathcal{CN}\big(\mathbf{0},\tau(t)\boldsymbol{\Sigma}\big)$. Importantly, we let $\mathbf{w}(t)$ and $\mathbf{o}(t)$ be mutually uncorrelated and have
\begin{eqnarray}  \label{3}  
    &\mathbf{n}(t)\sim\mathcal{CN}(\mathbf{0},\mathbf{Q}_t),\mathbf{Q}_t=\sigma^2\mathbf{I}_L+\tau(t)\boldsymbol{\Sigma}\succ\mathbf{0}_L.
\end{eqnarray}

We assume that $\mathbf{u}(t)$ in \eqref{1} is also deterministic and unknown, i.e., the well-known deterministic source signal model \cite{Ziskind}. Hence, $\mathbf{r}(t)\sim\mathcal{CN}\big(\mathbf{D}(\boldsymbol{\beta})\mathbf{u}(t),\mathbf{Q}_t\big)$ and the probability density function of $\mathbf{r}(t)$ can be written by
\begin{eqnarray}  \label{4}  
&p\big(\mathbf{r}(t);\boldsymbol{\Phi}_t\big)=
\frac{1}{\pi^L\mathrm{Det}(\mathbf{Q}_t)}\times\nonumber\\
&\exp\Big(-\big[\mathbf{r}(t)-\mathbf{D}\mathbf{u}(t)\big]^H\mathbf{Q}^{-1}_t\big[\mathbf{r}(t)-\mathbf{D}\mathbf{u}(t)\big]\Big),
\end{eqnarray} 
where $\boldsymbol{\Phi}_t=\big(\boldsymbol{\beta},\mathbf{u}(t),\sigma,\tau(t),\boldsymbol{\Sigma}\big)$ and $\mathbf{D}(\boldsymbol{\beta})$ is replaced with $\mathbf{D}$ for notational brevity.

\subsection{Problem Exposition}

In practice, we have to sample $\mathbf{r}(t)$ and collect statistically independent measurements (known as incomplete data in EM-type algorithms \cite{GMcLachlan}):
\begin{eqnarray}   \label{5}  
&\mathbf{r}(t)=\mathbf{D}\mathbf{u}(t)+\mathbf{n}(t),t=1,\dots,T,
\end{eqnarray}
where $T$ represents the number of measurements. Obviously, estimating $\boldsymbol{\beta}$ requires manipulating the measurements $\mathbf{R}$ with $\mathbf{R}=[\mathbf{r}(1)~\cdots~\mathbf{r}(T)]$. To proceed, we employ the method of maximum likelihood (ML) estimation \cite{Kay} and write the LLF of $\mathbf{R}$ by
\begin{eqnarray}   \label{6}  
&f(\boldsymbol{\Phi})=\ln p(\mathbf{R};\boldsymbol{\Phi})={\sum}_{t=1}^T\ln p\big(\mathbf{r}(t);\boldsymbol{\Phi}_t\big)\nonumber\\
&=-TL\ln(\pi)-{\sum}_{t=1}^T\ln\mathrm{Det}(\mathbf{Q}_t)\nonumber\\
&-{\sum}_{t=1}^T\big[\mathbf{r}(t)-\mathbf{D}\mathbf{u}(t)\big]^H\mathbf{Q}^{-1}_t\big[\mathbf{r}(t)-\mathbf{D}\mathbf{u}(t)\big],
\end{eqnarray}
where $\boldsymbol{\Phi}=(\boldsymbol{\beta},\mathbf{U},\sigma,\boldsymbol{\tau},\boldsymbol{\Sigma})$ with $\mathbf{U}=[\mathbf{u}(1)~\cdots~\mathbf{u}(T)]$ and $\boldsymbol{\tau}=[\tau(1)~\cdots~\tau(T)]^T$. Then, this method requires us to solve the non-convex DMLDF problem:
\begin{eqnarray}  \label{7}  
&\max_{\boldsymbol{\beta}\in(0,\pi)^G,\mathbf{U},\sigma>0,\boldsymbol{\tau}>\mathbf{0},\boldsymbol{\Sigma}\succ\mathbf{0}_L}f(\boldsymbol{\Phi}).
\end{eqnarray}

\section{ECM Algorithm}

The ECM algorithm has been applied and designed to solve problem \eqref{7} efficiently \cite{Yassine}, but simulation results show that it always improperly converges. In this section, we detail this ECM algorithm.

\subsection{Complete Data}

To begin with, since the ECM algorithm needs to determine its complete data, which contains missing data, we utilize the $\mathbf{v}(t)$'s as the missing data and construct the complete data as
\begin{eqnarray}   \label{8} 
&\mathbf{X}=(\mathbf{R},\mathbf{V}),\mathbf{V}=[\mathbf{v}(1)~\cdots~\mathbf{v}(T)].
\end{eqnarray}
Accordingly, the complete-data LLF is
\begin{eqnarray}  \label{9}  
&&Q(\mathbf{X};\boldsymbol{\Phi})=\ln p(\mathbf{X};\boldsymbol{\Phi})={\sum}_{t=1}^T\ln p\big(\mathbf{r}(t),\mathbf{v}(t);\boldsymbol{\Phi}_t\big)\nonumber\\
&&=\sum_{t=1}^T\ln\Big[p\big(\mathbf{r}(t)\mid\mathbf{v}(t);\boldsymbol{\beta},\mathbf{u}(t),\sigma\big)p\big(\mathbf{v}(t);\tau(t),\boldsymbol{\Sigma}\big)\Big]
\nonumber\\
&&=-2TL\ln(\pi)-TL\ln(\sigma^2)-T\ln\mathrm{Det}(\boldsymbol{\Sigma})\nonumber\\
&&-\frac{1}{\sigma^2}{\sum}_{t=1}^T\big\Vert\mathbf{r}(t)-\mathbf{Du}(t)-\mathbf{v}(t)\big\Vert^2
\nonumber\\
&&-{\sum}_{t=1}^T\Big[L\ln\big(\tau(t)\big)+\frac{1}{\tau(t)}\mathbf{v}^H(t)\boldsymbol{\Sigma}^{-1}\mathbf{v}(t)\Big].
\end{eqnarray}

\subsection{Procedure}

Without loss of generality, we describe one iteration of the ECM algorithm, which consists of one expectation step (E-step) and three conditional maximization steps (CM-steps).

\subsubsection{E-step}

Given $\bar{\boldsymbol{\Phi}}=(\bar{\boldsymbol{\beta}},\bar{\mathbf{U}},\bar{\sigma},\bar{\boldsymbol{\tau}},\bar{\boldsymbol{\Sigma}})$, this step computes the conditional expectation of $Q(\mathbf{X};\boldsymbol{\Phi})$ in \eqref{9} by
\begin{eqnarray}  \label{10}  
&Q(\boldsymbol{\Phi};\bar{\boldsymbol{\Phi}})=\mathcal{E}\big\{Q(\mathbf{X};\boldsymbol{\Phi})\mid\mathbf{R};\bar{\boldsymbol{\Phi}}\big\}\nonumber\\
&=-2TL\ln(\pi)-TL\ln(\sigma^2)-T\ln\mathrm{Det}(\boldsymbol{\Sigma})\nonumber\\
&-\frac{1}{\sigma^2}\sum_{t=1}^T\Big[\big\Vert\mathbf{r}(t)-\mathbf{Du}(t)-\bar{\boldsymbol{\mu}}(t)\big\Vert^2+\mathrm{Tr}\big\{\bar{\boldsymbol{\Pi}}_t\big\}\Big]\nonumber\\
&-\sum_{t=1}^T\Big[L\ln\big(\tau(t)\big)+\frac{1}{\tau(t)}\mathrm{Tr}\big\{\boldsymbol{\Sigma}^{-1}\bar{\boldsymbol{\Delta}}_t\big\}\Big],
\end{eqnarray}
where $\bar{\mathbf{Q}}_t=\bar{\sigma}^2\mathbf{I}_L+\bar{\tau}(t)\bar{\boldsymbol{\Sigma}}$, $\bar{\boldsymbol{\Delta}}_t=\bar{\boldsymbol{\mu}}(t)\bar{\boldsymbol{\mu}}^H(t)+\bar{\boldsymbol{\Pi}}_t$, $\bar{\boldsymbol{\mu}}(t)$ and $\bar{\boldsymbol{\Pi}}_t$ denote the conditional mean and covariance matrix of $\mathbf{v}(t)$, respectively, i.e.,
\begin{subequations}
\begin{eqnarray}  \label{11a}  
    \bar{\boldsymbol{\mu}}(t)=\mathcal{E}\big\{\mathbf{v}(t)\mid\mathbf{R};\bar{\boldsymbol{\Phi}}\big\}=\bar{\tau}(t)\bar{\boldsymbol{\Sigma}}\bar{\mathbf{Q}}_t^{-1}\big[\mathbf{r}(t)-\bar{\mathbf{D}}\bar{\mathbf{u}}(t)\big],\nonumber\\
\bar{\boldsymbol{\Pi}}_t=\mathcal{D}\big\{\mathbf{v}(t)\mid\mathbf{R};\bar{\boldsymbol{\Phi}}\big\}=\bar{\tau}(t)\bar{\boldsymbol{\Sigma}}-\bar{\tau}^2(t)\bar{\boldsymbol{\Sigma}}\bar{\mathbf{Q}}_t^{-1}\bar{\boldsymbol{\Sigma}}.\nonumber 
\end{eqnarray}  
\end{subequations}

\subsubsection{First CM-step}

This step only updates $\hat{\boldsymbol{\tau}}$ by maximizing $Q(\boldsymbol{\Phi};\bar{\boldsymbol{\Phi}})$ in \eqref{10} with respect to $\boldsymbol{\tau}$ while keeping $(\boldsymbol{\beta},\mathbf{U},\sigma,\boldsymbol{\Sigma})=(\bar{\boldsymbol{\beta}},\bar{\mathbf{U}},\bar{\sigma},\bar{\boldsymbol{\Sigma}})$, which leads to
\begin{eqnarray}  \label{12}  
&\min_{\tau(t)>0}L\ln\big(\tau(t)\big)+\frac{1}{\tau(t)}\mathrm{Tr}\big\{\bar{\boldsymbol{\Sigma}}^{-1}\bar{\boldsymbol{\Delta}}_t\big\},
\end{eqnarray}
where $t=1,\dots,T$. The solutions are
\begin{eqnarray}  \label{13}  
&\breve{\tau}(t)=\mathrm{Tr}\big\{\bar{\boldsymbol{\Sigma}}^{-1}\bar{\boldsymbol{\Delta}}_t\big\}/L>0,t=1,\dots,T.
\end{eqnarray}

\subsubsection{Second CM-step}
This step only updates $\hat{\boldsymbol{\Sigma}}$ by maximizing $Q(\boldsymbol{\Phi};\bar{\boldsymbol{\Phi}})$ in \eqref{10} with respect to $\boldsymbol{\Sigma}$ while keeping $(\boldsymbol{\beta},\mathbf{U},\sigma,\boldsymbol{\tau})=(\bar{\boldsymbol{\beta}},\bar{\mathbf{U}},\bar{\sigma},\breve{\boldsymbol{\tau}})$, which leads to
\begin{eqnarray}  \label{14} 
&\min_{\boldsymbol{\Sigma}\succ\mathbf{0}_L}\ln\mathrm{Det}(\boldsymbol{\Sigma})+\mathrm{Tr}\big\{\boldsymbol{\Sigma}^{-1}\cdot\frac{1}{T}\sum_{t=1}^T\frac{\bar{\boldsymbol{\Delta}}_t}{\breve{\tau}(t)}\big\}.
\end{eqnarray}
The solution is\footnote{For simple optimization, we do not impose the normalized constraint for $\boldsymbol{\Sigma}$, i.e., $\mathrm{Tr}\{\boldsymbol{\Sigma}\}=L$.}
\begin{eqnarray}   \label{15}  
&\breve{\boldsymbol{\Sigma}}=\frac{1}{T}\sum_{t=1}^T\frac{\bar{\boldsymbol{\Delta}}_t}{\breve{\tau}(t)}\succ\mathbf{0}_L. 
\end{eqnarray}

\subsubsection{Third CM-step}
This step only updates $(\hat{\boldsymbol{\beta}},\hat{\mathbf{U}},\hat{\sigma})$ by maximizing $Q(\boldsymbol{\Phi};\bar{\boldsymbol{\Phi}})$ in \eqref{10} with respect to $(\boldsymbol{\beta},\mathbf{U},\sigma)$ while keeping $(\boldsymbol{\tau},\boldsymbol{\Sigma})=(\breve{\boldsymbol{\tau}},\breve{\boldsymbol{\Sigma}})$, which leads to
\begin{eqnarray}  \label{16}  
&\min_{\boldsymbol{\beta}\in(0,\pi)^G,\mathbf{U},\sigma>0}TL\ln(\sigma^2)+\nonumber\\
&\frac{1}{\sigma^2}{\sum}_{t=1}^T\Big[\big\Vert\bar{\mathbf{e}}(t)-\mathbf{D}\mathbf{u}(t)\big\Vert^2+\mathrm{Tr}\big\{\bar{\boldsymbol{\Pi}}_t\big\}\Big].
\end{eqnarray}
where $\bar{\mathbf{e}}(t)=\mathbf{r}(t)-\bar{\boldsymbol{\mu}}(t)$. This can be solved by a separable fashion and the solution is
\begin{subequations}   \label{17}
\begin{eqnarray}   \label{17a}  
&\breve{\boldsymbol{\beta}}=\arg\max_{\boldsymbol{\beta}\in(0,\pi)^G}\mathrm{Tr}\big\{\mathbf{P}_{\mathbf{D}}\hat{\mathbf{R}}_{\bar{\mathbf{e}}}\big\},\\
&\breve{\mathbf{u}}(t)=(\breve{\mathbf{D}}^H\breve{\mathbf{D}})^{-1}\breve{\mathbf{D}}^H\bar{\mathbf{e}}(t),t=1,\dots,T,\\
&\breve{\sigma}^2=\alpha_1\Big(\frac{1}{L}\mathrm{Tr}\big\{\mathbf{P}_{\breve{\mathbf{D}}}^{\perp}\hat{\mathbf{R}}_{\bar{\mathbf{e}}}\big\}+\frac{1}{TL}\sum_{t=1}^T\mathrm{Tr}\big\{\bar{\boldsymbol{\Pi}}_t\big\}\Big)\nonumber\\
&+(1-\alpha_1)\bar{\sigma}^2>0,
\end{eqnarray}
\end{subequations}
where $\hat{\mathbf{R}}_{\bar{\mathbf{e}}}=(1/T)\sum_{t=1}^T\bar{\mathbf{e}}(t)\bar{\mathbf{e}}^H(t)$, $\mathbf{P}_{\mathbf{D}}=\mathbf{D}(\mathbf{D}^H\mathbf{D})^{-1}\mathbf{D}^H$ is the orthogonal projector onto the column-space of $\mathbf{D}$, and $\mathbf{P}_{\breve{\mathbf{D}}}^{\perp}=\mathbf{I}_L-\mathbf{P}_{\breve{\mathbf{D}}}$. Importantly, $\alpha_1\in[0,1]$ is used to adjust the difference between $\breve{\sigma}^2$ and $\bar{\sigma}^2$ and avoid extremely small $\breve{\sigma}^2$. 


\begin{remark}
Simulation results show that although the initialization or initial point is good, the ECM algorithm cannot properly converge, as shown in \textbf{Figure \ref{f0}} and \textbf{Figure \ref{f1}}. Moreover, the ECM algorithm only utilizes the complete-data LLF $Q(\mathbf{X};\boldsymbol{\Phi})$ in \eqref{9} to update parameter estimates at every iteration, which yields slow convergence \cite{Liu, GMcLachlan}.
\end{remark}

\section{ECME Algorithm}

To solve the issue of improper convergence from the ECM algorithm, we apply and design the ECME algorithm in this section.

\subsection{Procedure}

We design the ECME algorithm, whose every iteration is composed of one E-step and four CM-steps. Specifically,
\begin{itemize}
    \item the first CM-step directly utilizes the actual LLF $f(\boldsymbol{\Phi})$ in \eqref{6} to update $\hat{\mathbf{U}}$ and $\hat{\sigma}$ by two explicit formulas,
    \item the other steps are the same steps at every iteration of the ECM algorithm
\end{itemize}
Thus, we only describe the first CM-step below.

In order to perform the first CM-step conveniently, we adopt a new parametrization of $\mathbf{Q}_t$ in \eqref{3} by
\begin{eqnarray}  \label{18}  
&\mathbf{Q}_t=\sigma^2\mathbf{C}_t=\sigma^2\big(\mathbf{I}_L+\zeta(t)\boldsymbol{\Sigma}\big),\zeta(t)=\tau(t)/\sigma^2,
\end{eqnarray}
and have the new parameter $\boldsymbol{\zeta}=[\zeta(1)~\cdots~\zeta(T)]^T>\mathbf{0}$.

\subsubsection{First CM-step} \label{4.1.1}

Given $(\bar{\boldsymbol{\beta}},\bar{\boldsymbol{\zeta}},\bar{\boldsymbol{\Sigma}})$, this step only updates $(\hat{\mathbf{U}},\hat{\sigma})$ by directly maximizing $f(\boldsymbol{\Phi})$ in \eqref{6} with respect to $(\mathbf{U},\sigma)$, which leads to
\begin{eqnarray}   \label{19}  
&\min_{\mathbf{U},\sigma>0}TL\ln(\sigma^2)+\nonumber\\
&\frac{1}{\sigma^2}\sum_{t=1}^T\big[\mathbf{r}(t)-\bar{\mathbf{D}}\mathbf{u}(t)\big]^H\bar{\mathbf{C}}^{-1}_t\big[\mathbf{r}(t)-\bar{\mathbf{D}}\mathbf{u}(t)\big],
\end{eqnarray}
where $\bar{\mathbf{C}}_t=\mathbf{I}_L+\bar{\zeta}(t)\bar{\boldsymbol{\Sigma}}$. This problem can be solved in a separable fashion and the solution is
\begin{subequations}   \label{20}  
\begin{eqnarray}   \label{20a}
\bar{\mathbf{u}}(t)=(\bar{\mathbf{D}}^H\bar{\mathbf{C}}_t^{-1}\bar{\mathbf{D}})^{-1}\bar{\mathbf{D}}^H\bar{\mathbf{C}}_t^{-1}\mathbf{r}(t), t=1,\dots,T,\\
\bar{\sigma}^2=(1-\alpha_2)\bar{\sigma}^2+\alpha_2\times\nonumber\\
\Big(\frac{1}{TL}\sum_{t=1}^T\big[\mathbf{r}(t)-\bar{\mathbf{D}}\bar{\mathbf{u}}(t)]^H\bar{\mathbf{C}}_t^{-1}\big[\mathbf{r}(t)-\bar{\mathbf{D}}\bar{\mathbf{u}}(t)]\Big),
\end{eqnarray}
\end{subequations}
where $\alpha_2$ has the same purpose as $\alpha_1$. After obtaining $(\bar{\mathbf{U}},\bar{\sigma})$, the first CM-step ends and we have $\bar{\boldsymbol{\Phi}}=(\bar{\boldsymbol{\beta}},\bar{\mathbf{U}},\bar{\sigma},\bar{\boldsymbol{\tau}},\bar{\boldsymbol{\Sigma}})$ with $\bar{\boldsymbol{\tau}}=\bar{\sigma}^2\bar{\boldsymbol{\zeta}}$.


\subsection{Convergence, Initialization, and Complexity}

According to \cite{GMcLachlan,Liu,Wu}, we know that the ECME algorithm satisfies certain ``regularity'' conditions and its sequence of parameter estimates always converges to a stationary point of $f(\boldsymbol{\Phi})$. Moreover, the ECME algorithm additionally uses the actual LLF $f(\boldsymbol{\Psi})$ to update $(\mathbf{U},\sigma)$ at every iteration and thus yields faster convergence than the ECM algorithm.

Eq. \eqref{19} indicates that the ECME algorithm only needs to initialize $(\hat{\boldsymbol{\beta}},\hat{\sigma}, \hat{\boldsymbol{\zeta}},\hat{\boldsymbol{\Sigma}})$ without $\hat{\mathbf{U}}$, so it initializes fewer parameter estimates than the ECM algorithm. Moreover, the computational complexities of each iteration in both algorithms are dominated by updating $\hat{\boldsymbol{\beta}}$ in \eqref{17a}, which usually requires a $G$-dimensional search. Hence, the ECME algorithm nearly possesses the same computational complexity of every iteration as the ECM algorithm and is computationally more efficient due to the faster convergence.

\section{CRLB derivation}

In this section, we derive the deterministic CRLB of $\hat{\boldsymbol{\beta}}$ in the mixture noise of Gaussian and SI components and compare it with the well-known CRLB derived by Pesavento and Gershman \cite{Pesavento}.

The following \textbf{Theorem \ref{t1}} gives a closed-form expression for the deterministic CRBL of $\hat{\boldsymbol{\beta}}$ in the mixture noise of Gaussian and SI components.
\begin{theorem}  \label{t1}
    In the mixture noise of Gaussian and SI components, the deterministic CRLB of $\hat{\boldsymbol{\beta}}$ is given by
    \begin{eqnarray}  \label{21}
        &\mathrm{CRLB}(\hat{\boldsymbol{\beta}})=\Big(2\sum_{t=1}^T\Re\big\{\mathbf{U}^H(t)\tilde{\mathbf{Z}}_t^H\mathbf{P}_{\tilde{\mathbf{D}}_t}^{\perp}\tilde{\mathbf{Z}}_t\mathbf{U}(t)\big\}\Big)^{-1}
    \end{eqnarray}
    where $\tilde{\mathbf{Z}}_t=\mathbf{Q}_t^{-1/2}\mathbf{Z}$, $ \mathbf{Z}=\big[\frac{d}{d\beta_1}\mathbf{d}(\beta_1)~\cdots~\frac{d}{d\beta_G}\mathbf{d}(\beta_G)\big]$, $\tilde{\mathbf{D}}_t=\mathbf{Q}_t^{-1/2}\mathbf{D}$,$\mathbf{U}(t)=\mathrm{diag}\{u_1(t),\dots,u_G(t)\}$, $ \mathbf{P}_{\tilde{\mathbf{D}}_t}^{\perp}=\mathbf{I}_L-\tilde{\mathbf{D}}_t(\tilde{\mathbf{D}}_t^H\tilde{\mathbf{D}}_t)^{-1}\tilde{\mathbf{D}}_t^H$. Thus, the CRLB of $\hat{\beta}_g$ is $\big[\mathrm{CRLB}(\hat{\boldsymbol{\beta}})\big]_{g,g}$ and
    \begin{eqnarray} \label{22}
        \sqrt{\mathrm{D}\{\hat{\beta}_g\}}\ge\sqrt{\big[\mathrm{CRLB}(\hat{\boldsymbol{\beta}})\big]_{g,g}},g=1,\dots,G.
    \end{eqnarray}
\end{theorem}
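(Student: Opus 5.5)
We would prove Theorem \ref{t1} with the standard Slepian--Bangs route for a Gaussian model whose mean depends on $(\boldsymbol{\beta},\mathbf{U})$ and whose covariance $\mathbf{Q}_t$ depends only on $(\sigma,\boldsymbol{\tau},\boldsymbol{\Sigma})$. By \eqref{6}, the measurements $\mathbf{r}(t)\sim\mathcal{CN}(\mathbf{D}\mathbf{u}(t),\mathbf{Q}_t)$ are independent, so the Fisher information matrix (FIM) is a sum over $t$. For a complex Gaussian vector, the Slepian--Bangs formula gives
\begin{equation*}
[\mathbf{F}]_{i,j}=\sum_{t=1}^T\Big(\mathrm{Tr}\big\{\mathbf{Q}_t^{-1}\partial_i\mathbf{Q}_t\mathbf{Q}_t^{-1}\partial_j\mathbf{Q}_t\big\}+2\Re\big\{\partial_i\mathbf{m}_t^H\mathbf{Q}_t^{-1}\partial_j\mathbf{m}_t\big\}\Big),
\end{equation*}
with mean $\mathbf{m}_t=\mathbf{D}\mathbf{u}(t)$. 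We take the real parameter vector to be $\boldsymbol{\beta}$, the stacked $\Re\{\mathbf{u}(t)\}$ and $\Im\{\mathbf{u}(t)\}$, and the noise parameters $\boldsymbol{\eta}=(\sigma,\boldsymbol{\tau},\boldsymbol{\Sigma})$. Two facts then make the FIM block-diagonal between $(\boldsymbol{\beta},\mathbf{U})$ and $\boldsymbol{\eta}$: the mean does not depend on $\boldsymbol{\eta}$, and the covariance does not depend on $(\boldsymbol{\beta},\mathbf{U})$. The CRLB for $\boldsymbol{\beta}$ therefore comes from the $(\boldsymbol{\beta},\mathbf{U})$ block alone, and the SI-component parameters enter only through the whitening by $\mathbf{Q}_t^{-1/2}$. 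This works even though $\boldsymbol{\Sigma}$ is unnormalized (some parameters may be nonidentifiable), since the block structure isolates $\boldsymbol{\beta}$.

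The derivatives needed are the following:
\begin{itemize}
\item $\partial\mathbf{m}_t/\partial\beta_g=\frac{d}{d\beta_g}\mathbf{d}(\beta_g)\,u_g(t)$, which stacks to $\mathbf{Z}\mathbf{U}(t)$;
\item $\partial\mathbf{m}_t/\partial\Re\{\mathbf{u}(t)\}=\mathbf{D}$;
\item $\partial\mathbf{m}_t/\partial\Im\{\mathbf{u}(t)\}=\jmath\mathbf{D}$.
\end{itemize}
After whitening, these become $\tilde{\mathbf{Z}}_t\mathbf{U}(t)$, $\tilde{\mathbf{D}}_t$ and $\jmath\tilde{\mathbf{D}}_t$. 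The FIM blocks are then:
\begin{itemize}
\item $\mathbf{F}_{\beta\beta}=2\sum_t\Re\{\mathbf{U}^H(t)\tilde{\mathbf{Z}}_t^H\tilde{\mathbf{Z}}_t\mathbf{U}(t)\}$;
\item cross terms between $\boldsymbol{\beta}$ and $\mathbf{u}(t)$, given by $2\Re\{\mathbf{U}^H(t)\tilde{\mathbf{Z}}_t^H\tilde{\mathbf{D}}_t\}$ and $2\Re\{\jmath\,\mathbf{U}^H(t)\tilde{\mathbf{Z}}_t^H\tilde{\mathbf{D}}_t\}$, i.e. $-2\Im\{\mathbf{U}^H(t)\tilde{\mathbf{Z}}_t^H\tilde{\mathbf{D}}_t\}$;
\item a block-diagonal signal block whose $t$-th block is the real representation of $2\tilde{\mathbf{D}}_t^H\tilde{\mathbf{D}}_t$.
\end{itemize}
The signal block is block-diagonal because each $\mathbf{u}(t)$ appears only in snapshot $t$.

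Next we apply the Schur complement $\mathbf{F}_{\beta\beta}-\mathbf{F}_{\beta u}\mathbf{F}_{uu}^{-1}\mathbf{F}_{u\beta}$. Since $\mathbf{F}_{uu}$ is block-diagonal in $t$, the subtraction splits into a sum of per-snapshot terms. The key lemma is the real/complex identity: for complex $\mathbf{A}$ and $\mathbf{B}\succ\mathbf{0}$, the real-composite quadratic form of $[\Re\mathbf{A}\;\;{-\Im\mathbf{A}}]$ against the real representation of $\mathbf{B}^{-1}$ equals $\Re\{\mathbf{A}\mathbf{B}^{-1}\mathbf{A}^H\}$. Applying it with $\mathbf{A}=\mathbf{U}^H(t)\tilde{\mathbf{Z}}_t^H\tilde{\mathbf{D}}_t$ and $\mathbf{B}=\tilde{\mathbf{D}}_t^H\tilde{\mathbf{D}}_t$, the subtracted term for snapshot $t$ is $2\Re\{\mathbf{U}^H(t)\tilde{\mathbf{Z}}_t^H\tilde{\mathbf{D}}_t(\tilde{\mathbf{D}}_t^H\tilde{\mathbf{D}}_t)^{-1}\tilde{\mathbf{D}}_t^H\tilde{\mathbf{Z}}_t\mathbf{U}(t)\}$. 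Combining this with $\mathbf{F}_{\beta\beta}$ produces the projector $\mathbf{P}^{\perp}_{\tilde{\mathbf{D}}_t}$, and inverting gives \eqref{21}. The Cram\'er--Rao inequality then yields \eqref{22} for any unbiased estimator.

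The main obstacle is this real-representation Schur complement step: the complex-to-real bookkeeping has to be checked carefully so that the factor $2$ and the $\Re\{\cdot\}$ come out exactly. We would verify it by the standard argument of Stoica and Nehorai, adapted to a $t$-dependent weighting. Alternatively, one can concentrate out $\mathbf{u}(t)$ through its weighted least-squares form \eqref{20a} with $\mathbf{C}_t$ replaced by $\mathbf{Q}_t$, which avoids the real-representation algebra. Full-rank conditions ($\tilde{\mathbf{D}}_t$ of full column rank, since $G<L$ and the DOAs are distinct, and the resulting matrix nonsingular) are assumed.
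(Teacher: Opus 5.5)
Your proposal is correct and is essentially the paper's proof. Both use the same FIM with zero signal--noise cross terms. Both have cross terms $2\Re\{\mathbf{U}^H(t)\mathbf{Z}^H\mathbf{Q}_t^{-1}\mathbf{D}\}$ and $-2\Im\{\mathbf{U}^H(t)\mathbf{Z}^H\mathbf{Q}_t^{-1}\mathbf{D}\}$, and per-snapshot blocks given by the real representation of $2\mathbf{D}^H\mathbf{Q}_t^{-1}\mathbf{D}$. You obtain these from the Slepian--Bangs formula and the paper from expected outer products of the score, which is the same computation. Both then use the same snapshot-wise Schur complement and the real/complex identity yielding $\Re\{\mathbf{A}\mathbf{B}^{-1}\mathbf{A}^H\}$, which produces $\mathbf{P}^{\perp}_{\tilde{\mathbf{D}}_t}$.

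Your extra remark also goes beyond the paper. The noise block is singular because $\mathbf{Q}_t$ is invariant under $(\boldsymbol{\tau},\boldsymbol{\Sigma})\mapsto(c\boldsymbol{\tau},\boldsymbol{\Sigma}/c)$, and you correctly note that the block-diagonal structure still isolates the bound for $\boldsymbol{\beta}$. The paper passes over this point silently, and your argument is the right justification for it.
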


\begin{proof}
   See Appendix \ref{fulu}.
\end{proof}

Based on \eqref{21}, we compare it with the deterministic CRLB of $\hat{\boldsymbol{\beta}}$ in unknown nonuniform noise given by \cite{Pesavento}
\begin{eqnarray}  \label{23}
        &\mathrm{CRLB}(\hat{\boldsymbol{\beta}})=\Big(2\sum_{t=1}^T\Re\big\{\mathbf{U}^H(t)\ddot{\mathbf{Z}}^H\mathbf{P}_{\ddot{\mathbf{D}}}^{\perp}\ddot{\mathbf{Z}}\mathbf{U}(t)\big\}\Big)^{-1}
\end{eqnarray}
where $\ddot{\mathbf{Z}}=\mathbf{Q}^{-1/2}\mathbf{Z}$, $\ddot{\mathbf{D}}=\mathbf{Q}^{-1/2}\mathbf{D}$, and $\mathbf{Q}$ means the noise covariance matrix. Comparing \eqref{21} and \eqref{23}, we can observe that both are similar and, respectively, involve the transformed array manifolds $\tilde{\mathbf{D}}_t$ and $\ddot{\mathbf{D}}$ for prewhitening. Importantly, the main difference is that the noise covariance matrix $\mathbf{Q}_t$ in the mixture noise has more unknown parameters and is time-dependent due to the texture process $\tau(t)$, which increases the number of unknown parameters and makes DOA estimation more complicated.

\section{Simulation Results}

In this section, simulation results are used to show the ECME algorithm and verify the CRLBs in \eqref{22}. Unless otherwise specified in the following figures, we adopt $L=8$, $G=2$, $\beta_1=40^{\circ}$, $\beta_2=70^{\circ}$, $T=50$, $u_1(t)=5$, $u_2(t)=10$. The noise parameters are $\sigma=1$, $\boldsymbol{\Sigma}_{i,j}=10\times0.8^{\vert i-j\vert}e^{\jmath0.2\pi(i-j)}$, and $\tau(t)$ follows the exponential distribution with mean $0.1$. Moreover, initialize $\hat{\beta}_1=35^{\circ}$, $\hat{\beta}_2=65^{\circ}$, $\hat{u}_1(t)=\hat{u}_2(t)=1$, $\hat{\sigma}=1$, $\hat{\boldsymbol{\Sigma}}=\mathbf{I}_L$, $\hat{\tau}(t)=1$, $\hat{\zeta}(t)=1$, $\alpha_1=\alpha_2=0.1$, and the number of iterations for each algorithm is $60$.

\subsection{Search Method for Problem \eqref{17a}}

Problem \eqref{17a} is classical and can be solved by the well-known alternating projection (AP) algorithm \cite{Ziskind}. Therefore, we adopt the initialization method and the AP algorithm for problems \eqref{17a}.

We give the initialization method in \textbf{Algorithm \ref{alg:C}} and the AP algorithm in \textbf{Algorithm \ref{alg:D}} for problem \eqref{17a}. Moreover, we design \textbf{Algorithm \ref{alg:E}} to solve
\begin{eqnarray}  \label{24}   
&\breve{\beta}_g=\arg\max_{\beta_g\in(0,\pi)}\mathrm{Tr}\{\mathbf{P}_{\mathbf{A}}\hat{\mathbf{R}}_{\bar{\mathbf{e}}}\}\nonumber\\
&=\arg\max_{\beta_g\in(0,\pi)}f_g\big(-\pi\cos(\beta_g)\big),
\end{eqnarray}
where $\mathbf{A}_g$ consists of the column vector(s) in $\mathbf{A}$ except $\mathbf{d}(\beta_g)$ or $\mathbf{d}(\hat{\beta}_g)$, $\mathbf{P}^{\perp}_{\mathbf{A}_g}=\mathbf{I}_L-\mathbf{P}_{\mathbf{A}_g}$, and
\begin{eqnarray}  \label{25}  
    f_g\big(-\pi\cos(\beta_g)\big)=\frac{\mathbf{b}^H(\beta_g)\mathbf{P}^{\perp}_{\mathbf{A}_g}\hat{\mathbf{R}}_{\bar{\mathbf{e}}}\mathbf{P}^{\perp}_{\mathbf{A}_g}\mathbf{b}(\beta_g)}{\mathbf{b}^H(\beta_g)\mathbf{P}^{\perp}_{\mathbf{A}_g}\mathbf{b}(\beta_g)}.
\end{eqnarray}

\begin{algorithm}
\caption{Initialization Method} \label{alg:C}
\begin{algorithmic}[1]
\STATE{Initialize $\mathbf{A}=[\mathbf{b}(\theta_1)]$ and $g=1$.}
\WHILE{$g\le G$}
    \STATE{Obtain $\hat{\theta}_g$ using \textbf{Algorithm \ref{alg:E}}.}
    \STATE{$g=g+1$ and $\mathbf{A}=[\mathbf{A}~\mathbf{b}(\theta_{g})]$.}
\ENDWHILE
\STATE{ Output $\hat{\boldsymbol{\beta}}=[\hat{\beta}_1~\cdots~\hat{\beta}_G]^T$. }
\end{algorithmic}
\end{algorithm}

\begin{algorithm}
\caption{AP Algorithm} \label{alg:D}
\begin{algorithmic}[1]
\STATE{Initialize $\hat{\boldsymbol{\beta}}$ using \textbf{Algorithm \ref{alg:C}}, $c=1$.}
\STATE{Let $\mathbf{A}=[\mathbf{b}(\hat{\beta}_1)~\cdots~\mathbf{b}(\hat{\beta}_G)]$.}
\WHILE{$c\le 40$}
    \FOR{$g=1,\dots,G$}
        \STATE{Update $\hat{\beta}_g$ using \textbf{Algorithm \ref{alg:E}}.}
    \ENDFOR
    \STATE {$c=c+1$.}
\ENDWHILE
\STATE{ Output $\hat{\boldsymbol{\beta}}=[\hat{\beta}_1~\cdots~\hat{\beta}_G]^T$.}
\end{algorithmic}
\end{algorithm}

\begin{algorithm}
\caption{Golden Section Search for Problem \eqref{25}} \label{alg:E}
\begin{algorithmic}[1]
\STATE{Initialize $\Delta u=2\pi/10^3$, $u_g=-\pi+\Delta u$, and $f_{\mathrm{max}}=-\infty$.}
\WHILE{$u_g<\pi$}
    \IF{$f_g(u_g)>f_{\mathrm{max}}$}
        \STATE{$u_g^*=u_g$ and $f_{\mathrm{max}}=f_g(u_g)$.}
    \ENDIF
    \STATE{$u_g=u_g+\Delta u$.}
\ENDWHILE
\STATE{$start=u_g^*-\Delta u$ and $end=u_g^*+\Delta u$.}
\WHILE{$end-start>10^{-4}$}
\STATE{$mid_1=start+0.382(end-start)$ and $mid_2=start+0.618(end-start)$.}
\IF{$f_g(mid_1)<f_g(mid_2)$}
   \STATE{$start=mid_1$.}
\ELSE
   \STATE{$end=mid_2$.}
\ENDIF
\ENDWHILE
\STATE{$\breve{u}_g=(start+end)/2$ and $\breve{\beta}_g=\cos^{-1}\big(\breve{u}_g/(-\pi)\big)$.}
\end{algorithmic}
\end{algorithm}

\subsection{Algorithm Evaluation and CRLB Verification}

Without loss of generality, \textbf{Figure \ref{f0}} compares the convergence of both algorithms given the same $\mathbf{R}$. We can observe that the ECM algorithm improperly converges although the initial point is good. On the contrary, the ECME algorithm properly converges. \textbf{Figure \ref{f1}} shows a scatter plot of $(\hat{\beta}_1,\hat{\beta}_2)$'s from both algorithms under $50$ trials, and we can see that the $(\hat{\beta}_1,\hat{\beta}_2)$'s from the ECME algorithm center on the true value $(40^{\circ},70^{\circ})$ more closely. Hence, we analyze the accuracy of the ECME algorithm in \textbf{Figure \ref{f2}}.

\begin{figure} 
\centerline{\includegraphics[width=19pc]{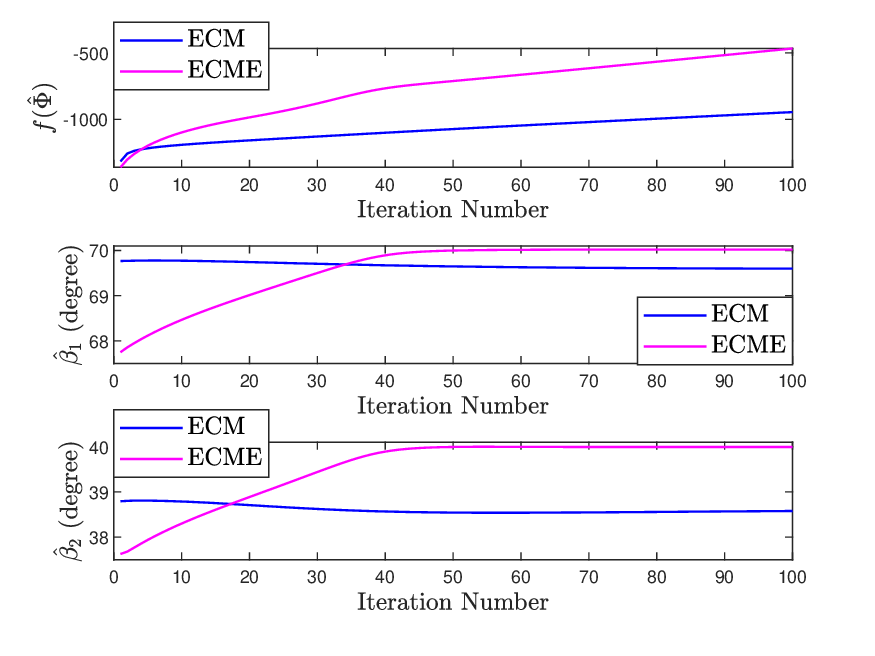}}
\caption{Convergence comparison of both algorithms.}\label{f0}
\end{figure}

\begin{figure} 
\centerline{\includegraphics[width=19pc]{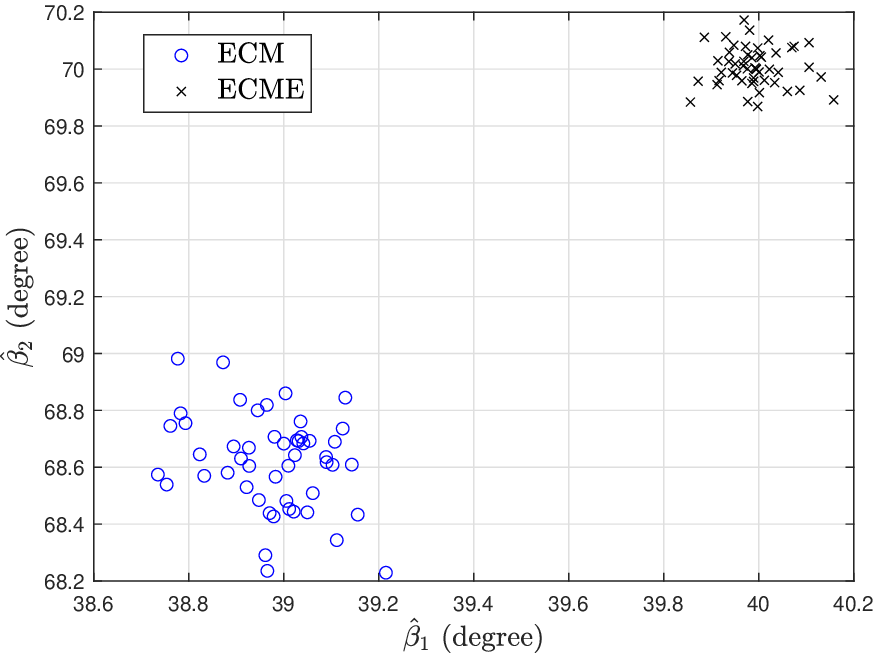}}
\caption{Scatter plot of $(\hat{\beta}_1,\hat{\beta}_2)$'s obtained from both algorithms.}\label{f1}
\end{figure}


Fig. \ref{f2} shows RMSE results of $\hat{\beta}_1$ and $\hat{\beta}_2$ from the ECME algorithm, FLOM-MUSIC \cite{THLiu}, and $\ell_p$-MUSIC \cite{Zeng} with the CRLBs in \eqref{22}. Here, $u_1(t)\sim\mathcal{CN}(0,\gamma)$ and $u_2(t)\sim\mathcal{CN}(0,2\gamma)$ are mutually uncorrelated, each RMSE is calculated from $50$ independent trials, and the $\hat{\beta}_g$'s from FLOM-MUSIC and $\ell_p$-MUSIC are obtained by root-MUSIC \cite{Barabell}. We can see that as expected, the RMSE results of $\hat{\beta}_1$ and $\hat{\beta}_2$ from the ECME algorithm asymptotically approach the CRLBs as $\gamma$ (i.e., signal power) increases, which verifies the CRLBs in \eqref{22}. Moreover, each RMSE curve from the ECME algorithm is below those from FLOM-MUSIC and $\ell_p$-MUSIC because ML based DOA estimation outperforms subspace based DOA estimation in terms of accuracy \cite{Krim}.

\begin{figure} 
\centerline{\includegraphics[width=19pc]{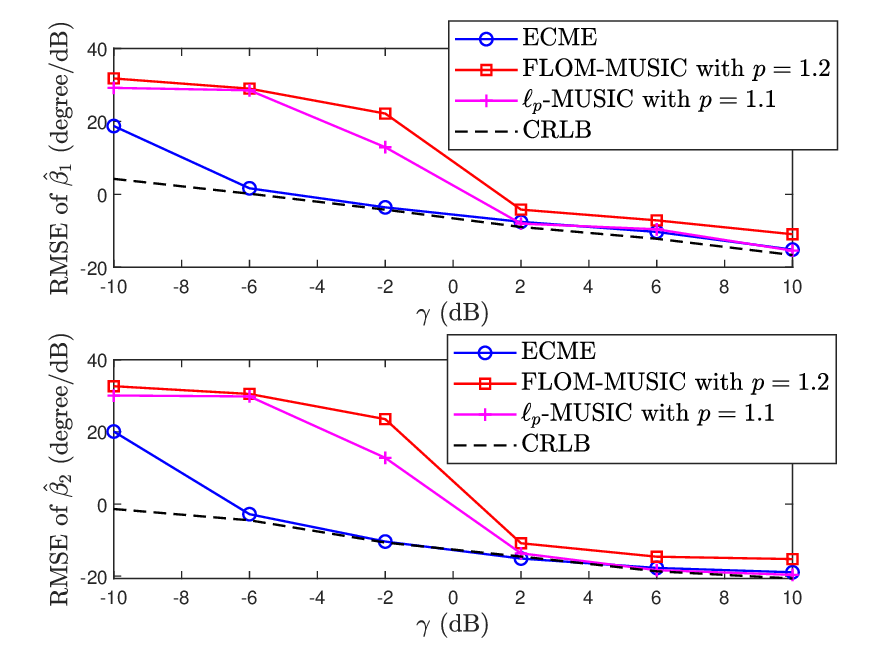}}
\caption{RMSE results of $\hat{\beta}_g$ and $\hat{\beta}_2$ obtained from three methods versus $\gamma$.}\label{f2}
\end{figure}

\section{Conclusion}

The ECME algorithm has been applied and designed to efficiently solve the DMLDF problem in the mixture noise of Gaussian and SI components, which additionally utilizes the actual LLF to first update partial parameter estimates at every iteration and does not need to initialize all parameter estimates. Moreover, the deterministic CRLBs of DOA estimators have been derived. Simulation results have indicated that the ECME algorithm exhibits proper convergence and the CRLBs are correct.

\begin{appendices}
\section{Proof of \textbf{Theorem \ref{t1}}}
\label{fulu}
Because all Fisher information matrix (FIM) cross-terms involving both signal and noise parameters are identically zero, we utilize the actual LLF $f(\boldsymbol{\Phi})$ in \eqref{6} to derive the derivative of $f(\boldsymbol{\Phi})$ with respect to the signal parameter $\beta_g$ by
\begin{eqnarray}  \label{A1}
    &\frac{\partial f}{\partial\beta_g}=2\sum_{t=1}^T\Re\big\{u_g^*(t)\mathbf{z}_g^H\mathbf{Q}_t^{-1}\big[\mathbf{r}(t)-\mathbf{Du}(t)\big]\big\},
\end{eqnarray}
where $g=1,\dots,G$, $u_g^*(t)$ denotes the conjugate of $u_g(t)$, and $\mathbf{z}_g=\frac{d}{d\beta_g}\mathbf{d}(\beta_g)$. Utilizing \eqref{A1}, we can obtain the derivative of $f(\boldsymbol{\Phi})$ with respect to $\boldsymbol{\beta}$ by
\begin{eqnarray} \label{A2}
  &\frac{\partial f}{\partial\boldsymbol{\beta}}=2\sum_{t=1}^T\Re\big\{\mathbf{U}^H(t)\mathbf{Z}^H\mathbf{Q}_t^{-1}\big[\mathbf{r}(t)-\mathbf{Du}(t)\big]\big\}.
\end{eqnarray}
Similarly, we derive the derivative of $f(\boldsymbol{\Phi})$ with respect to the signal parameter $\mathbf{u}^*(i)$ by
\begin{eqnarray}  \label{A3}
    &\frac{\partial f}{\partial\mathbf{u}^*(i)}=\mathbf{D}^H\mathbf{Q}_i^{-1}\big[\mathbf{r}(i)-\mathbf{Du}(i)\big]\nonumber\\
    &=\frac{1}{2}\big(\frac{\partial f}{\partial\Re\{\mathbf{u}(i)\}}+\jmath\frac{\partial f}{\partial\Im\{\mathbf{u}(i)\}}\big),
\end{eqnarray}
which indicates
    \begin{subequations} \label{A4}
    \begin{eqnarray}
        &\frac{\partial f}{\partial\Re\{\mathbf{u}(i)\}}=2\Re\big\{\mathbf{D}^H\mathbf{Q}_i^{-1}\big[\mathbf{r}(i)-\mathbf{Du}(i)\big]\big\},\\
        &\frac{\partial f}{\partial\Im\{\mathbf{u}(i)\}}=2\Im\big\{\mathbf{D}^H\mathbf{Q}_t^{-1}\big[\mathbf{r}(i)-\mathbf{Du}(i)\big]\big\},
    \end{eqnarray}
    \end{subequations}
    where $i=1,\dots,T$. Next, we compute the non-zero submatrices of the FIM matrix unrelated to the noise parameter $(\sigma,\boldsymbol{\tau},\boldsymbol{\Sigma})$ by
    \begin{subequations} \label{A5}
        \begin{eqnarray}
\mathbf{F}_{\boldsymbol{\beta},\boldsymbol{\beta}}&=&\mathcal{E}\Big\{\Big(\frac{\partial f}{\partial\boldsymbol{\beta}}\Big)\Big(\frac{\partial f}{\partial\boldsymbol{\beta}}\Big)^T\Big\}\nonumber\label{A2a}\\
            &=&2{\sum}_{t=1}^T\Re\big\{\mathbf{U}^H(t)\mathbf{Z}^H\mathbf{Q}_t^{-1}\mathbf{Z}\mathbf{U}(t)\big\}\nonumber\\
            &=&\boldsymbol{\Gamma},\\
            \mathbf{F}_{\boldsymbol{\beta},\Re\{\mathbf{u}(i)\}}&=&\mathcal{E}\Big\{\Big(\frac{\partial f}{\partial\boldsymbol{\beta}}\Big)\Big(\frac{\partial f}{\partial\Re\{\mathbf{u}(i)\}}\Big)^T\Big\}\nonumber\\
            &=&2\Re\big\{\mathbf{U}^H(i)\mathbf{Z}^H\mathbf{Q}_i^{-1}\mathbf{D}\big\}\nonumber\\
            &=&\Re\{\boldsymbol{\Lambda}_i\},i=1,\dots,T,\\
            \mathbf{F}_{\boldsymbol{\beta},\Im\{\mathbf{u}(i)\}}&=&\mathcal{E}\Big\{\Big(\frac{\partial f}{\partial\boldsymbol{\beta}}\Big)\Big(\frac{\partial f}{\partial\Im\{\mathbf{u}(i)\}}\Big)^T\Big\}\nonumber\\
            &=&-2\Im\big\{\mathbf{U}^H(i)\mathbf{Z}^H\mathbf{Q}_i^{-1}\mathbf{D}\big\}\nonumber\\
            &=&-\Im\{\boldsymbol{\Lambda}_i\},i=1,\dots,T,\\
            \mathbf{F}_{\Re\{\mathbf{u}(i)\},\Re\{\mathbf{u}(i)\}}&=&\mathcal{E}\Big\{\Big(\frac{\partial f}{\partial\Re\{\mathbf{u}(t)\}}\Big)\Big(\frac{\partial f}{\partial\Re\{\mathbf{u}(i)\}}\Big)^T\Big\}\nonumber\\
            &=&2\Re\big\{\mathbf{D}^H\mathbf{Q}_i^{-1}\mathbf{D}\big\}\nonumber\\
            &=&\Re\{\mathbf{H}_i\},i=1,\dots,T,\\
            \mathbf{F}_{\Re\{\mathbf{u}(i)\},\Im\{\mathbf{u}(i)\}}&=&\mathcal{E}\Big\{\Big(\frac{\partial f}{\partial\Re\{\mathbf{u}(t)\}}\Big)\Big(\frac{\partial f}{\partial\Im\{\mathbf{u}(i)\}}\Big)^T\Big\}\nonumber\\
            &=&-2\Im\big\{\mathbf{D}^H\mathbf{Q}_i^{-1}\mathbf{D}\big\}\nonumber\\
            &=&-\Im\{\mathbf{H}_i\},i=1,\dots,T,\\
            \mathbf{F}_{\Im\{\mathbf{u}(i)\},\Im\{\mathbf{u}(i)\}}&=&\mathcal{E}\Big\{\Big(\frac{\partial f}{\partial\Im\{\mathbf{u}(t)\}}\Big)\Big(\frac{\partial f}{\partial\Im\{\mathbf{u}(i)\}}\Big)^T\Big\}\nonumber\\
            &=&2\Re\big\{\mathbf{D}^H\mathbf{Q}_i^{-1}\mathbf{D}\big\}\nonumber\\
            &=&\Re\{\mathbf{H}_i\},i=1,\dots,T.
        \end{eqnarray}
    \end{subequations}
    Here, we can obtain the FIM matrix by
    \begin{equation}  \label{A6}
        \begin{bmatrix}
            \boldsymbol{\Gamma} & \mathbf{B}_1 & \mathbf{B}_2  & \cdots & \mathbf{B}_T & \mathbf{0} \\
            \mathbf{B}_1^T & \mathbf{E}_1 & \mathbf{0} & \cdots & \mathbf{0} & \mathbf{0}\\
            \mathbf{B}_2^T & \mathbf{0} & \mathbf{E}_2 & \cdots & \mathbf{0} & \mathbf{0} \\
            \vdots & \vdots & \vdots & \ddots & \vdots & \vdots \\
            \mathbf{B}_T^T & \mathbf{0} & \mathbf{0} & \cdots & \mathbf{E}_T & \mathbf{0} \\
            \mathbf{0} & \mathbf{0} & \mathbf{0} & \cdots & \mathbf{0} & \mathbf{F}_{\boldsymbol{\phi},\boldsymbol{\phi}}
            \end{bmatrix}
    \end{equation}
where $\mathbf{F}_{\boldsymbol{\phi},\boldsymbol{\phi}}$ is the submatrix of the FIM matrix only related to the noise parameter $(\sigma,\boldsymbol{\tau},\boldsymbol{\Sigma})$, $\mathbf{B}_i=\big[\Re\{\boldsymbol{\Lambda}_i\}~ -\Im\{\boldsymbol{\Lambda}_i\}\big]$, and
\begin{equation}
    \mathbf{E}_i=
    \begin{bmatrix}
        \Re\{\mathbf{H}_i\} & -\Im\{\mathbf{H}_i\} \\
        -\Im\{\mathbf{H}_i\}^T & \Re\{\mathbf{H}_i\}
    \end{bmatrix}=
    \begin{bmatrix}
        \Re\{\mathbf{H}_i\} & -\Im\{\mathbf{H}_i\} \\
        \Im\{\mathbf{H}_i\} & \Re\{\mathbf{H}_i\}
    \end{bmatrix}.\nonumber
\end{equation}
Applying the partitioned matrix inversion formula, we obtain
\begin{eqnarray} \label{A7}
    &\mathrm{CRLB}^{-1}(\hat{\boldsymbol{\beta}})=\boldsymbol{\Gamma}-\sum_{t=1}^T\mathbf{B}_t\mathbf{E}_t^{-1}\mathbf{B}_t^T.
\end{eqnarray}
Let $\mathbf{G}_i=\mathbf{H}_i^{-1}=(\mathbf{D}^H\mathbf{Q}_i^{-1}\mathbf{D})^{-1}/2$ and 
\begin{equation} \label{A8}
    \mathbf{E}_i^{-1}=
    \begin{bmatrix}
        \Re\{\mathbf{G}_i\} & -\Im\{\mathbf{G}_i\} \\
        \Im\{\mathbf{G}_i\} & \Re\{\mathbf{G}_i\}
    \end{bmatrix},
\end{equation}
which makes
\begin{eqnarray} \label{A9}
    \mathbf{B}_t\mathbf{E}_t^{-1}\mathbf{B}_t^T&=&
    \begin{bmatrix}
        \Re\{\boldsymbol{\Lambda}_t\} & -\Im\{\boldsymbol{\Lambda}_t\}
    \end{bmatrix}\cdot\nonumber\\
    &&\begin{bmatrix}
        \Re\{\mathbf{G}_t\} & -\Im\{\mathbf{G}_t\} \\
        \Im\{\mathbf{G}_t\} & \Re\{\mathbf{G}_t\}
    \end{bmatrix}
    \begin{bmatrix}
        \Re\{\boldsymbol{\Lambda}_t\}^T \\ -\Im\{\boldsymbol{\Lambda}_t\}^T
    \end{bmatrix}\nonumber\\
&=&\begin{bmatrix}
        \Re\{\boldsymbol{\Lambda}_t\} & -\Im\{\boldsymbol{\Lambda}_t\}
    \end{bmatrix}
    \begin{bmatrix}
        \Re\{\mathbf{G}_t\boldsymbol{\Lambda}_t^H\} \\ \Im\{\mathbf{G}_t\boldsymbol{\Lambda}_t^H\}
    \end{bmatrix}\nonumber\\
    &=&\Re\big\{\boldsymbol{\Lambda}_t\mathbf{G}_t\boldsymbol{\Lambda}_t^H\big\}.
\end{eqnarray}
Finally, inserting \eqref{A2a} and \eqref{A9} into \eqref{A7} obtains
\begin{eqnarray}
    \mathrm{CRLB}(\hat{\boldsymbol{\beta}})=\Big(2\sum_{t=1}^T\Re\big\{\mathbf{U}^H(t)\tilde{\mathbf{Z}}_t^H\mathbf{P}_{\tilde{\mathbf{D}}_t}^{\perp}\tilde{\mathbf{Z}}_t\mathbf{U}(t)\big\}\Big)^{-1}.
\end{eqnarray}
The proof is completed.

\end{appendices}

\printbibliography 

\end{document}